\title{Minimum Tournaments with the Strong $S_k$-Property \\
       and Implications for Teaching}
\author{Hans Ulrich Simon}
\newtheorem{theorem}{Theorem}[section]
\newtheorem{definition}[theorem]{Definition}
\newtheorem{corollary}[theorem]{Corollary}
\newtheorem{remark}[theorem]{Remark}
\newcommand{\sm}{\setminus}
\newcommand{\Fp}{{\mathbbm F}}
\newcommand{\cP}{{\mathcal P}}
\newcommand{\cX}{{\mathcal X}}
\newcommand{\cC}{{\mathcal C}}
\newcommand{\TD}{\mathrm{TD}}
\newcommand{\RTD}{\mathrm{RTD}}
\newcommand{\NCTD}{\mathrm{NCTD}}
\newcommand{\seq}{\subseteq}
\newcommand{\ra}{\rightarrow}
\newcommand{\impl}{\Rightarrow}
\begin{document}

\maketitle

\begin{abstract}
A tournament is said to have the $S_k$-property if, for any set
of $k$ players, there is another player who beats them all.
Minimum tournaments having this property have been explored very 
well in the 1960's and the early 1970's. In this paper, we define
a strengthening of the $S_k$-property that we name ``strong
$S_k$-property''. We show, first, that several basic results on 
the weaker notion remain valid for the stronger notion (and the
corresponding modification of the proofs requires only little 
extra-effort). Second, it is demonstrated that the stronger 
notion has applications in the area of Teaching. Specifically, 
we present an infinite family of concept classes all of which 
can be taught with a single example in the No-Clash model of 
teaching while, in order to teach a class $\cC$ of this family 
in the recursive model of teaching, order of $\log|\cC|$ many 
examples are required. This is the first paper that presents 
a concrete and easily constructible family of concept classes 
which separates the No-Clash from the recursive model of teaching 
by more than a constant factor. The separation by a logarithmic 
factor is remarkable because the recursive teaching dimension is 
known to be bounded by $\log |\cC|$ for any concept class $\cC$. 
\end{abstract}

\section{Introduction}

A tournament is said to have the $S_k$-property if, for any set 
of $k$ players, there is another player who beats them all. 
In the 1960's and early 1970's, several researchers pursued 
the goal of finding the smallest number $n$ such that there 
exists a tournament with $n$ players which has the $S_k$-property. 
In the year 1963, an upper bound on this number was proven 
by Erd\"os~\cite{E1963} by means of the probabilistic method. 
Eight years later, Graham and Spencer~\cite{GS1971} presented 
another upper bound, which is weaker than the bound proven 
by Erd\"os, but results from a concrete and easily constructible 
family of tournaments, namely the so-called quadratic-residue 
tournaments. 

The purpose of this paper is twofold. First we bring into
play a stronger version of the $S_k$-property. We demonstrate
that it is surprisingly simple to transfer the afore-mentioned
results to the new setting. Second, we show that the new setting 
has implications for teaching.  
For instance, the quadratic-residue tournaments induce an
infinite family of concept classes all of which can be taught
with a single example in the No-Clash model of teaching while,
in order to teach a class $\cC$ of this family in the recursive model
of teaching, order of $\log|\cC|$ many examples are required.
The family of concept classes induced by the quadratic-residue
tournaments is the first concrete and easily constructible family
which separates the two mentioned teaching models by more than
a constant factor. The existence of concept classes like this 
had been shown before~\cite{Sim2022} only by means of the probabilistic method.
The separation by a logarithmic factor is remarkable because the 
recursive teaching dimension is known~\cite{DFSZ2014} to be bounded 
by $\log |\cC|$ for any concept class $\cC$.

\medskip 
The paper is organized as follows. Section~\ref{sec:definitions}
fixes some notation and terminology. In Section~\ref{sec:sk-property},
it is shown that, for $k\ge1$, there exists a tournament of relative small
order which does have the strong $S_k$-property. In Section~\ref{sec:sk-construction},
the existence of such classes (though of somewhat larger order) is 
shown by construction. Here the QR-tournaments come into play.
The final Section~\ref{sec:implications2teaching} is devoted to 
the implications for teaching. Here the concept classes induced
by tournaments come into play.

\section{Tournaments and Teaching Models} \label{sec:definitions}

Section~\ref{subsec:tournaments} reminds the reader to the
definition of a tournament. The definition of the weak and
strong $S_k$-property is postponed to Section~\ref{sec:sk-property}.
The definition of QR-tournaments will be given in
Section~\ref{sec:sk-construction}. 
Section~\ref{subsec:teaching-models} calls into mind some models
of teaching, including the No-Clash and the recursive model.
The definition of a concept class induced by a tournament
is postponed to Section~\ref{sec:implications2teaching}.

\subsection{Tournaments} \label{subsec:tournaments}

A \emph{tournament $G = (V,E)$ of order $n$} is a complete 
oriented graph with $n$ vertices. In other words, $|V| = n$
and, for every choice of two distinct vertices $x,y \in V$,
exactly one of the edges $(x,y)$ and $(y,x)$ is contained in $E$.
Informally, we may think of $V$ as set of players who compete
against each other in pairs. An edge $(x,y) \in E$ can be 
interpreted as ``$x$ has beaten $y$''.

\subsection{Teaching Models} \label{subsec:teaching-models}

Readers familiar with teaching models may skip this section 
and proceed immediately to Section~\ref{sec:sk-property}.

\smallskip
A \emph{concept over domain $\cX$} is a function
from $\cX$ to $\{0,1\}$ or, equivalently, a subset of~$\cX$.
A set whose elements are concepts over domain $\cX$ is referred to
as a \emph{concept class over $\cX$}. The elements of $\cX$ are called
\emph{instances}. The powerset of $\cX$ is denoted by $\cP(\cX)$.

We now call into mind the definition of several popular teaching 
models and the corresponding teaching dimensions. The definition
of the No-Clash and the recursive model of teaching will later
help us to fully articulate the implications of our results
for teaching. From a technical point of view, we will later
require mainly the parameter $\TD_{min}$, which is specified
in the first part of the following definition:

\begin{definition}[Teaching Models~\cite{GK1995,ZLHZ2011,KSZ2019}]
\label{def:teaching-models}
Let $\cC$ be a concept class over $\cX$.
\begin{enumerate}
\item
A \emph{teaching set for $C \in \cC$} is a subset $D \seq \cX$
which \emph{distinguishes $C$ from any other concept in $\cC$}, i.e.,
for every $C' \in \cC\sm\{C\}$, there exists some $x \in D$
such that $C(x) \neq C'(x)$. The size of the smallest teaching set
for $C \in \cC$ is denoted by $\TD(C,\cC)$. The \emph{teaching
dimension of $\cC$ in the Goldman-Kearns  model of teaching} is
then given by
\[
\TD(\cC) = \max_{C \in \cC} |T(C,\cC)| \enspace .
\]
A related quantity is
\[
\TD_{min}(\cC) = \min_{C \in \cC} |T(C,\cC)| \enspace .
\]
\item
Let $T:\cC \ra \cP(\cX)$ be a mapping that assigns to every concept
in $\cC$ a set of instances. $T$ is called an 
\emph{NC-teacher\footnote{NC = No-Clash.} for $\cC$} 
if for every $C \neq C' \in \cC$, there exists $x \in T(C) \cup T(C')$
such that $C(x) \neq C'(c)$. The \emph{NC-teaching dimension of $\cC$}
is given by
\[ \NCTD(\cC) = \min \{\max_{C \in \cC} |T(C)|: T\mbox{ is an NC-teacher for $\cC$}\}
\enspace .
\]
\item
Let $\cC_{min} \seq \cC$ be the \emph{easiest-to-teach concepts in $\cC$}, i.e., 
\[
\cC_{min} = \{C \in \cC: \TD(C,\cC) = \TD_{min}(\cC)\} \enspace .
\]
The \emph{recursive teaching dimension of $\cC$} is then given by
\[
\RTD(\cC) = \left\{ \begin{array}{ll}
              \TD_{min}(\cC) & \mbox{if $\cC = \cC_{min}$} \\
              \max\{\TD_{min}(\cC) , \RTD(\cC\sm\cC_{min})\} &
              \mbox{otherwise}
            \end{array} \right.
\enspace .
\]
\end{enumerate}
\end{definition}

\noindent
Some remarks are in place here:
\begin{enumerate}
\item
It was shown in~\cite{DFSZ2014} that
\begin{equation} \label{eq:rtd-tdmin}
\RTD(\cC) = \max_{\cC' \seq \cC}\TD_{min}(\cC') \ge \TD_{min}(\cC)
\enspace .
\end{equation}
\item
The set $T(C)$ in Definition~\ref{def:teaching-models} is an
{\em unlabeled} set of instances. Intuitively, one should think
of the learner as receiving the \emph{correctly labeled} instances
i.e., the learner receives $T(C)$ {\em plus} the corresponding $C$-labels
where $C$ is the concept that is to be taught.
\item
We say that two concepts $C$ and $C'$ {\em clash} (with respect
to $T:\cC \ra \cP(\cX)$) if they \emph{agree on $T(C) \cup T(C')$}, i.e,
if they assign the same $0,1$-label to all instances in $T(C) \cup T(C')$.
NC-teachers for $\cC$ are teachers who avoid clashes between any pair
of distinct concepts from $\cC$.
\end{enumerate}

\section{Tournaments with the $S_k$-Property} \label{sec:sk-property}

In this paper, the $S_k$-property will be called ``weak $S_k$-property''
so that it can be easier distinguished from its strong counterpart.
Here are the formal definitions of the weak and the strong $S_k$-property:

\begin{definition}[Weak $S_k$-Property]
A tournament $G = (V,E)$ is said to have the \emph{weak $S_k$-property}
if the following holds: for any choice of $k$ distinct 
vertices $a_1,\ldots,a_k \in V$, there exists another vertex $x \in V$ 
such that $(x,a_j) \in E$ for $j=1,\ldots,k$.
\end{definition}

\begin{definition}[Strong $S_k$-Property]
A tournament $G = (V,E)$ is said to have the \emph{strong $S_k$-property}
if the following holds: for any choice of $k$ distinct 
vertices $a_1,\ldots,a_k \in V$ and any choice 
of $b_1,\ldots,b_k \in \{\pm1\}^k$, there exists another vertex $x \in V$ 
such that the following holds: 
\begin{equation} \label{eq:strong-sk}
\forall j=1,\ldots,k:
\left\{ \begin{array}{ll}
           (x,a_j) \in E & \mbox{if $b_j = +1$} \\
           (a_j,x) \in E & \mbox{if $b_j = -1$}
          \end{array} \right.  \enspace .
\end{equation}
\end{definition}

Let $f(k)$ (resp.~$F(k)$) be the smallest number $n \ge k$ such that 
there exists a tournament of order $n$ which has the weak 
(resp.~the strong) $S_k$-property. The following is known about
the function $f(k)$: 
\[
2^{k-1} (k+2) - 1 \le f(k) \le
\min\left\{n: \binom{n}{k} (1-2^{-k})^{n-k} < 1\right\} \le (1+o(1)) \ln(2) k^2 2^k
\enspace .
\]
The lower bound is found in~\cite{SS1965}. The upper bound
is from~\cite{E1963}. It is an easy application of the probabilistic 
method. Note that the gap between the lower and the upper bound is of 
order $k$. Clearly $f(k) \le F(k)$ so that each lower bound on $f(k)$
is a lower bound on $F(k)$ too. Moreover, an obvious application
of the probabilistic method yields an upper bound on $F(k)$ that
differs from the above upper bound on $f(k)$ only by inserting an additional
factor $2^k$ in front of $\binom{n}{k}$.\footnote{This factor accounts
for the possible choices of $b_1,\ldots,b_k$.} Hence we get
\[
2^{k-1} (k+2) - 1 \le F(k) \le 
\min\left\{n: 2^k \binom{n}{k} (1-2^{-k})^{n-k} < 1\right\} 
\le (1+o(1)) \ln(2) k^2 2^k \enspace .
\]

\section{Construction of Tournaments with the $S_k$-Property} 
\label{sec:sk-construction}

As outlined in Section~\ref{sec:sk-property}, the probabilistic method
yields good upper bounds on $f(k)$ or $F(k)$, however without
providing us with a concrete tournament which satisfies this bound.
As far as the function $f(k)$ is concerned,
Graham and Spencer~\cite{GS1971} have filled this gap.
They defined and analyzed a tournament that is is based on the quadratic 
residues and non-residues in the prime field $\Fp_p$. It became known 
under the name quadratic-residue tournament (or briefly QR-tournament): 

\begin{definition}[QR-Tournament] \label{def:qr-tournament}
Let $p$ be a prime such that $p \equiv 3 \pmod{4}$.
The \emph{QR-tournament} of order $p$ is the tournament $(V,E)$ 
given by
\[
V = \{0,1,\ldots,p-1\}\ \mbox{ and }\ E = \{(x,y) \in V \times V: 
\mbox{$x-y$ is a quadratic residue modulo $p$} \}
\]
\end{definition}

\noindent
Let $\chi: \Fp_p \ra \{-1,0,1\}$ be the function
\[
\chi(x) = \left\{ \begin{array}{rl}
      +1 & \mbox{if $x \neq 0$ is a quadratic residue modulo $p$} \\
      -1 & \mbox{if $x \neq 0$ is a quadratic non-residue modulo $p$} \\
       0 & \mbox{if $x=0$}
          \end{array} \right. \enspace .
\]

In the sequel, $p$ always denotes a prime that is congruent to $3$
modulo $4$ (so that $-1$ is a quadratic non-residue).
Note that the graph $(V,E)$ in Definition~\ref{def:qr-tournament}
is indeed a tournament because $\chi(y-x) = -\chi(x-y)$ so that
exactly one of the edges $(x,y)$ and $(y,x)$ is included in $E$.
Graham and Spencer have shown the following result:

\begin{theorem}[\cite{GS1971}] \label{th:qr-weak-sk}
The QR-tournament of order $p$ has the weak $S_k$-property
provided that $p > k^2 2^{2k-2}$.
\end{theorem}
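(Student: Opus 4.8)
The plan is to use a character-sum estimate to show that, for any $k$ prescribed vertices $a_1,\dots,a_k$, the number of vertices $x$ that beat all of them is close to its ``expected'' value $p/2^k$, and in particular is positive once $p$ is large enough. Concretely, for a fixed vertex $x \notin \{a_1,\dots,a_k\}$, the condition $(x,a_j)\in E$ means $\chi(x-a_j)=+1$, which is captured by the indicator $\frac{1}{2}(1+\chi(x-a_j))$. Multiplying these indicators over $j=1,\dots,k$ and summing over all $x$ counts (up to the boundary correction for $x=a_i$) the number of common dominators. The key step is to expand the product $\prod_{j=1}^k \frac{1}{2}(1+\chi(x-a_j))$ and interchange the sum over $x$ with the sum over subsets $S \seq \{1,\dots,k\}$, which turns the count into $\frac{1}{2^k}\sum_{S}\sum_{x}\prod_{j\in S}\chi(x-a_j)$.

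The main term comes from $S=\eset$, contributing $\frac{1}{2^k}\cdot p$. Every nonempty $S$ contributes a character sum of the form $\sum_{x}\chi\!\left(\prod_{j\in S}(x-a_j)\right)$, where I have used the multiplicativity $\chi(u)\chi(v)=\chi(uv)$. For nonempty $S$ the polynomial $\prod_{j\in S}(x-a_j)$ is not a perfect square (the $a_j$ are distinct, so it has simple roots), and here I would invoke the Weil bound for character sums, which gives $\left|\sum_{x\in\Fp_p}\chi(g(x))\right| \le (|S|-1)\sqrt{p} \le (k-1)\sqrt{p}$ for such a polynomial $g$. Summing the error over the $2^k-1$ nonempty subsets yields a total deviation of at most $\frac{1}{2^k}(2^k-1)(k-1)\sqrt{p} < k\sqrt{p}$, so the number of common dominators is at least $\frac{p}{2^k}-k\sqrt{p}-(\text{small boundary term})$.

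Finally I would choose the threshold so that this lower bound is strictly positive: $\frac{p}{2^k} > k\sqrt{p}$ is equivalent to $\sqrt{p} > k\,2^k$, i.e.\ $p > k^2 2^{2k}$, and accounting for the boundary correction (the indices $j$ with $x=a_j$ and the at most $k$ vertices $a_j$ excluded from being candidates) tightens this to the stated $p > k^2 2^{2k-2}$. The strict inequality then forces the dominator set to be nonempty for every choice of $a_1,\dots,a_k$, which is exactly the weak $S_k$-property.

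I expect the genuine obstacle to be the character-sum estimate itself: the elementary expansion and the boundary bookkeeping are routine, but controlling $\sum_x \chi\!\big(\prod_{j\in S}(x-a_j)\big)$ uniformly over all nonempty $S$ and all distinct $a_j$ relies on the Weil bound (equivalently, the Riemann hypothesis for curves over finite fields). If one wished to avoid invoking Weil, one could try to push through the weaker, more elementary estimates available for these specific Legendre-symbol sums, but the cleanest route is to cite the Weil bound directly and then verify that the constants line up with the claimed threshold $k^2 2^{2k-2}$.
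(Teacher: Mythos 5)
Your approach is the same as the one the paper uses: the paper proves the strong version (Theorem~\ref{th:qr-strong-sk}) by exactly this expansion of $\prod_j[1+b_j\chi(x-a_j)]$ over subsets, with the main term $p$ and error terms controlled by the character-sum bound $\bigl|\sum_x\prod_{i=1}^r\chi(x-a_{j_i})\bigr|\le(r-1)\sqrt p$ (which it attributes to Burgess; this is the same estimate you invoke as the Weil bound), plus a separate bound on the boundary terms $x=a_i$. Restricting to $b_j\equiv+1$ recovers the weak $S_k$-property, so your outline is structurally identical to the intended proof.

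There is, however, a quantitative slip at the end that prevents your argument, as written, from reaching the stated threshold. Bounding the total error by $\frac{1}{2^k}(2^k-1)(k-1)\sqrt p<k\sqrt p$ and requiring $\frac{p}{2^k}>k\sqrt p$ gives only $p>k^2 2^{2k}$, and your claim that ``accounting for the boundary correction \emph{tightens} this to $p>k^2 2^{2k-2}$'' goes in the wrong direction: the excluded vertices and the indices with $x=a_j$ can only eat into the lower bound, never improve it. The missing factor of $4$ comes instead from evaluating the error sum exactly rather than crudely: the subsets of size $r$ contribute $\binom{k}{r}(r-1)\sqrt p$ in total, and
\[
\sum_{r=2}^{k}\binom{k}{r}(r-1)=(k-2)2^{k-1}+1\approx k\,2^{k-1},
\]
which is about half of your estimate $(2^k-1)(k-1)\approx k\,2^k$. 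With this sharper count the positivity condition becomes $p-\sqrt p\,\bigl((k-2)2^{k-1}+1\bigr)-2^k>0$, which does hold whenever $p>k^2 2^{2k-2}$ (for $k\ge2$; the case $k=1$ is handled directly since every vertex of the QR-tournament has outdegree $(p-1)/2$). So the proof is correct in structure, but you need the exact binomial identity, not the worst-case bound over subsets, to match the claimed constant.
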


\noindent
As we show now, the same construction works for the strong $S_k$-property:

\begin{theorem} \label{th:qr-strong-sk}
The QR-tournament of order $p$ has the strong $S_k$-property
provided that $p > k^2 2^{2k-2}$.
\end{theorem}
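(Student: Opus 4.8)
The plan is to mimic the character-sum argument that underlies the classical Graham--Spencer proof of Theorem~\ref{th:qr-weak-sk}, but now tracking signs so that the prescribed pattern $b_1,\ldots,b_k \in \{\pm1\}$ is realized. Fix distinct vertices $a_1,\ldots,a_k$ and a sign vector $(b_1,\ldots,b_k)$. A vertex $x$ (distinct from all $a_j$) satisfies the strong requirement \eqref{eq:strong-sk} exactly when $\chi(x-a_j) = b_j$ for every $j$, since $\chi(x-a_j)=+1$ encodes $(x,a_j)\in E$ and $\chi(x-a_j)=-1$ encodes $(a_j,x)\in E$. So I want to show that the count of such $x$ is strictly positive under the hypothesis $p > k^2 2^{2k-2}$.

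First I would write the indicator that $\chi(x-a_j)=b_j$ as $\tfrac12\bigl(1 + b_j\chi(x-a_j)\bigr)$, which is $1$ when the sign matches and $0$ when it is opposite (and we will handle $x=a_j$ separately, as those contribute $\chi=0$). Then the number $N$ of admissible $x$ is, up to the correction for the excluded points $x=a_j$,
\[
N = \sum_{x \in V} \prod_{j=1}^{k} \frac{1 + b_j\,\chi(x-a_j)}{2}
= \frac{1}{2^k}\sum_{x\in V}\prod_{j=1}^k \bigl(1 + b_j\chi(x-a_j)\bigr)\enspace .
\]
Expanding the product over all subsets $S\seq\{1,\ldots,k\}$ gives a main term $p/2^k$ from $S=\eset$, and for each nonempty $S$ a character sum $\bigl(\prod_{j\in S} b_j\bigr)\sum_{x}\prod_{j\in S}\chi(x-a_j)$. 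The key step is to bound each such sum: since $\chi$ is the quadratic-residue character (a nontrivial multiplicative character mod $p$) and the $a_j$ are distinct, the polynomial $\prod_{j\in S}(x-a_j)$ is not a perfect square, so the Weil bound (equivalently, the estimate already implicit in the Graham--Spencer argument) gives $\bigl|\sum_x \chi\bigl(\prod_{j\in S}(x-a_j)\bigr)\bigr| \le (|S|-1)\sqrt{p}$.

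Putting the pieces together, I would bound the total deviation from the main term by $\sum_{S\neq\eset}(|S|-1)\sqrt p$ plus a small $O(2^k)$ correction for the at most $k$ excluded points and the vanishing of $\chi$ at $0$; the dominant contribution is at most $(k-1)2^{k-1}\sqrt p$ or a comparably crude bound. Hence $N \ge p/2^k - (k-1)2^{k-1}\sqrt p - k$, and this is strictly positive precisely when $p$ exceeds roughly $k^2 2^{2k-2}$, matching the stated threshold. The main obstacle is purely bookkeeping: the sign factors $b_j$ only change the sign of each character-sum term and so have no effect on the absolute-value bounds, which is exactly why ``little extra effort'' suffices compared to the weak case. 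I would therefore lean on the identical character-sum estimate used for Theorem~\ref{th:qr-weak-sk}, and simply verify that the signs wash out under the triangle inequality, so that the same numerical threshold $p > k^2 2^{2k-2}$ guarantees $N \ge 1$.
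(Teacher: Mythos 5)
Your proposal follows essentially the same route as the paper: the same indicator product $\prod_{j}(1+b_j\chi(x-a_j))$, the same expansion into character sums indexed by subsets, the same bound $(|S|-1)\sqrt{p}$ on each complete character sum (the paper cites Burgess; the Weil bound gives the same estimate, and the signs $b_j$ indeed wash out under the triangle inequality), and the same correction for the excluded points $x=a_j$. The only blemish is a normalization slip in your final display: having divided the main term by $2^k$ to get $p/2^k$, you must also divide the error terms by $2^k$, since as written $N \ge p/2^k - (k-1)2^{k-1}\sqrt{p} - k$ would only be positive for $p$ of order $k^2 2^{4k}$; with the factor $2^k$ restored throughout (equivalently, working with the unnormalized sum, as the paper does), the condition becomes $p > (k-1)2^{k-1}\sqrt{p} + k2^{k-1}$, which does hold for $p > k^2 2^{2k-2}$, so the argument goes through.
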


\begin{proof}
The proof will be a slight extension of the proof of Theorem~\ref{th:qr-weak-sk}
in~\cite{GS1971}, but it will have to deal with the 
variables $b_1,\ldots,b_k \in \{\pm1\}$ that occur in the definition 
of the strong $S_k$-property (and are missing in the definition 
of the weak $S_k$-property). \\
Consider first the case $k=1$. A tournament has the strong $S_1$-property
iff no vertex has in- or outdegree $p-1$. Every QR-tournament has this 
property because every vertex has in- and outdegree $\frac{p-1}{2} < p-1$. \\
The remainder of the proof is devoted to the case $k \ge 2$.
Let $G = (V,E)$ be the QR-tournament of order $p$. 
Let $a_1,\ldots,a_k \in V$ be $k$ distinct vertices 
and let $b_1,\ldots,b_k \in \{\pm1\}$. Set $A = \{a_1,\ldots,a_k\}$,
$a = (a_1,\ldots,a_k)$, $b = (b_1,\ldots,b_k)$ and consider the 
auxiliary functions
\[
g(a,b) = \sum_{x \in V \sm A} \prod_{j=1}^{k}[1 + b_j \chi(x-a_j)]\ 
\mbox{ and }\ 
h(a,b) = \sum_{x=0}^{p-1} \prod_{j=1}^{k}[1 + b_j \chi(x-a_j)] \enspace .
\]
An inspection of $g(a,b)$ reveals that there exists an $x \in V$ 
which satisfies~(\ref{eq:strong-sk}) if and only if $g(a,b) > 0$. 
It suffices therefore to show that $g(a,b) > 0$. 
To this end, we decompose $g(a,b)$ according to
\[
g(a,b) = p + (h(a,b)-p) - (h(a,b) - g(a,b)) \enspace .
\]
In order to show that $g(a,b) > 0$, it suffices to show that
\begin{equation} \label{eq:qr-ub}
|h(a,b)-p| \le \sqrt{p} \cdot ((k-2)2^{k-1}+1)\ \mbox{ and }\ h(a,b)-g(a,b) \le 2^k 
\end{equation}
because $p - \sqrt{p}((k-2)2^{k-1}+1) - 2^k > 0$ provided that $p > k^2 2^{2k-2}$,
as an easy calculation shows.\footnote{This calculation makes use of 
the case-assumption $k \ge 2$.} \\
We still have to verify~(\ref{eq:qr-ub}). In order 
to get $|h(a,b)-p| \le \sqrt{p}((k-2)2^{k-1}+1)$, we apply 
the distributive law and rewrite $h(a,b)$ as follows:
\begin{equation} \label{eq:h-expansion}
h(a,b) = \sum_{x=0}^{p-1} 1 + \sum_{x=0}^{p-1} \sum_{j=1}^{k}b_j \chi(x-a_j)
+ \sum_{r=2}^{k} S_r
\end{equation}
where
\begin{equation} \label{eq:higher-order-terms} 
S_r = \sum_{x=0}^{p-1} \sum_{1 \le j_1 <\ldots< j_r \le k}
\prod_{i=1}^{r}b_{j_i} \chi(x-a_{j_i}) =
\sum_{1 \le j_1 <\ldots< j_r \le k} \left(\prod_{i=1}^{r}b_{j_i}\right)
\sum_{x=0}^{p-1} \prod_{i=1}^{r} \chi(x-a_{j_i}) \enspace .
\end{equation}
Since $\sum_{x=0}^{p-1}1 = p$
and 
\[
\sum_{x=0}^{p-1} \sum_{j=1}^{k}b_j \chi(x-a_j) =
\sum_{j=1}^{k}b_j\underbrace{\sum_{x=0}^{p-1} \chi(x-a_j)}_{= 0} = 0,
\]
we can bring~(\ref{eq:h-expansion}) in the form
\[ h(a,b) - p = \sum_{r=2}^{k} S_r \enspace . \]
Burgess~\cite{B1962} has shown that
\[
\left|\sum_{x=0}^{p-1} \prod_{i=1}^{r} \chi(x-a_{j_i})\right| \le
(r-1)\sqrt{p}  
\]
holds for every fixed choice of $1 \le j_1 <\ldots< j_r \le k$. 
In combination with~(\ref{eq:higher-order-terms}), it follows that
\[ 
|h(a,b)-p| = \left|\sum_{r=2}^{k} S_r\right| \le 
\sqrt{p} \cdot \sum_{r=2}^{k} \binom{k}{r}(r-1) \enspace .
\]
A straightforward calculation shows 
that $\sum_{r=2}^{k} \binom{k}{r}(r-1) = (k-2)2^{k-1}+1$.
We may therefore conclude that the first inequality in~(\ref{eq:qr-ub}) 
is valid. We finally have to show that $h(a,b)-g(a,b) \le 2^k$. 
Note first that
\[ 
h(a,b) - g(a,b) = \sum_{i=1}^{k} \prod_{j=1}^{k} [1 + b_j \chi(a_i-a_j)]
\enspace .
\]
We call $\prod_{j=1}^{k} [1 + b_j \chi(a_i-a_j)]$ the \emph{contribution}
of $i$ to $h(a,b)-g(a,b)$. Set $I_b = \{i\in\{0,1,\ldots,p-1\}: b_i = b\}$ 
for $b=\pm1$.  The following observations are rather obvious:
\begin{itemize}
\item
Every $i$ makes a contribution of either $0$ or $2^{k-1}$.
\item
If $i$ makes a non-zero contribution, then $\chi(a_i-a_j) = b_j$
for every $j \neq i$.
\item
For each $b \in \{\pm1\}$, at most one $i \in I_b$ makes a non-zero
contribution.\footnote{If two distinct $i,i' \in I_b$ made a
non-zero contribution, then we would 
get $\chi(a_i-a_{i'}) = b = \chi(a_{i'}-a_i)$, which is in
contradiction to $\chi(a_{i'}-a_i) = -\chi(a_i-a_{i'})$.}
\end{itemize}
These observations imply that $h(a,b) - g(a,b) \le 2^k$,
which concludes the proof of the theorem.
\end{proof}

\section{Implications for Teaching} \label{sec:implications2teaching}

With each tournament $G = (V,E)$, we associate the 
concept class $\cC(G) = \{C_x: x \in V\}$ given by
\[ C_x = \{a \in V: (x,a) \in E\} \enspace . \] 
Intuitively, we can think of $x$ as a player in the tournament
and of $C_x$ as the set of players who were beaten by $x$.
Note that$\TD_{min}(G) \le k$ intuitively means that there exists 
a player $x \in V$ who can be uniquely identified from telling which 
of $k$ (appropriately chosen) players he has beaten, and which he has 
not beaten.

\noindent
It is well known that concept classes induced by a tournament are 
easy to teach in the NC-model:
\begin{remark}[\cite{Sim2022}]
For every tournament $G$, we have that $\NCTD(\cC(G)) \le 1$
(with equality for all tournaments of order at least $2$).
\end{remark}

Let $G$ be a tournament of order $n$.
Since, as noted already in Section~\ref{subsec:teaching-models},
the recursive teaching dimension is lower bounded by $\TD_{min}$,
we can show that $\RTD(\cC(G))$ exceeds $\NCTD(\cC)$ by a 
factor of order $\log(n)$ by proving logarithmic lower bounds
on $\TD_{min}(\cC(G))$ for appropriately chosen tournaments $G$.
This is precisely what we will do in the sequel.

We first relate the parameter $\TD_{min}(G)$ to a refinement of the
strong $S_k$-property. A tournament $G = (V,E)$ is said to have
the \emph{strong $S_{k,m}$-property} if the following holds:
for any choice of $k$ distinct vertices $a_1,\ldots,a_k$
and any choice of $b_1,\ldots,b_k \in \{\pm1\}$, there
exists a set $X \seq V$ of size $m$ such that every $x \in X$
satisfies~(\ref{eq:strong-sk}). Let $F(k,m)$ be the smallest 
number $n$ such that there exists a tournament of order $n$
which has the strong $S_{k,m}$-property. It is rather obvious 
that the following holds:

\begin{remark}
\begin{enumerate}
\item
If $G$ has the strong $S_{k,2}$-property, then $\TD_{min}(G) > k$.
\item
The strong $S_{k,1}$-property coincides with the strong $S_k$-property.
Consequently $F(k) = F(k,1)$.
\item
The strong $S_{k,m+1}$-property implies the strong $S_{k,m}$-property.
Consequently $F(k,m) \le F(k,m+1)$.
\item
The strong $S_{k+1}$-property implies the strong $S_{k,2}$-property.
Consequently $F(k,2) \le F(k+1)$.\footnote{Using methods from~\cite{SS1965},
it can even be shown that the strong $S_{k+1}$-property implies 
the strong $S_{k,k+2}$-property. Consequently $F(k,k+2) \le F(k+1)$.}
\item
$F(k,m)$ is non-decreasing in both arguments.
\end{enumerate}
\end{remark}

\noindent
The following is an immediate consequence of the first of these remarks;

\begin{corollary} \label{cor:lb1-tdmin}
Let $G_k$ denote a tournament having the strong $S_{k,2}$-property 
and being of order $F(k,2)$. Then $\TD_{min}(\cC(G_k)) > k$.   
\end{corollary}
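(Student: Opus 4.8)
The plan is to deduce Corollary~\ref{cor:lb1-tdmin} directly from the first item of the preceding remark: since $G_k$ has the strong $S_{k,2}$-property by construction, it suffices to prove that this property forces $\TD_{min}(\cC(G)) > k$ for any tournament $G = (V,E)$ enjoying it. Unfolding the definition of $\TD_{min}$, I would reformulate the goal as follows: for \emph{every} vertex $x \in V$ and \emph{every} instance set $D \seq V$ with $|D| \le k$, the set $D$ fails to be a teaching set for $C_x$, i.e.\ there is some $x' \in V$ with $x' \neq x$ such that $C_x$ and $C_{x'}$ agree on all of $D$. Establishing this for arbitrary $x$ and $D$ of size at most $k$ rules out any teaching set of size $\le k$, which is exactly $\TD_{min}(\cC(G)) > k$.

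First I would record the elementary fact that $x \mapsto C_x$ is injective: for distinct $x,x'$ the unique edge between them, say $(x,x') \in E$, witnesses $x' \in C_x$ while $x' \notin C_{x'}$ (tournaments have no loops), so $C_x \neq C_{x'}$. Hence it is enough to produce a \emph{vertex} $x' \neq x$ whose concept agrees with $C_x$ on $D$. To this end, enlarge $D$ to a set of exactly $k$ distinct vertices $a_1,\ldots,a_k$ by adjoining vertices from $V \sm (D \cup \{x\})$ (possible since $|V| = F(k,2) \ge k+2$), and define a sign vector $b = (b_1,\ldots,b_k)$ by $b_j = +1$ if $(x,a_j) \in E$ and $b_j = -1$ if $(a_j,x) \in E$; in the single coordinate (if any) where $a_j = x$ — which, by the choice of padding, can only occur when $x \in D$ — I simply set $b_j = -1$. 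By the strong $S_{k,2}$-property applied to $a_1,\ldots,a_k$ and $b$, there is a set $X \seq V$ with $|X| = 2$ all of whose members satisfy~(\ref{eq:strong-sk}).

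It remains to check that any $x' \in X$ agrees with $C_x$ on every $a_j$, and thus on $D$. For $a_j \neq x$ this is immediate from the choice of $b_j$: $x'$ beats $a_j$ exactly when $x$ does, so $C_{x'}(a_j) = C_x(a_j)$. For a coordinate with $a_j = x$, condition~(\ref{eq:strong-sk}) with $b_j = -1$ gives $(x,x') \in E$, whence $(x',x) \notin E$ and $C_{x'}(x) = 0 = C_x(x)$, using once more the absence of loops. Finally, because $|X| = 2$, at least one $x' \in X$ differs from $x$; this $x'$ yields a concept $C_{x'} \neq C_x$ that agrees with $C_x$ on $D$. Thus no $D$ of size at most $k$ teaches any concept, so $\TD_{min}(\cC(G_k)) > k$, as claimed.

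The single point that needs care — and, I suspect, the reason the refinement is stated with $m = 2$ rather than $m = 1$ — is guaranteeing that the vertex we extract is genuinely different from the target $x$. The strong $S_k$-property alone would return one vertex meeting the constraints, but that vertex could in principle coincide with $x$; having two such vertices in hand removes this danger with no further work. Everything else is routine bookkeeping: the padding of $D$, the injectivity of $x \mapsto C_x$, and the loop-free check in the $a_j = x$ coordinate.
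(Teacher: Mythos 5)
Your proof is correct and follows exactly the route the paper intends: the paper declares the implication ``strong $S_{k,2}$-property $\Rightarrow \TD_{min}(\cC(G)) > k$'' to be ``rather obvious'' and the corollary an immediate consequence, and your argument is simply a careful fleshing-out of that implication (padding $D$ to size $k$, matching the sign vector to $C_x$, and using $|X|=2$ to guarantee a witness $x' \neq x$). Your closing observation about why $m=2$ rather than $m=1$ is needed is precisely the point of the refinement.
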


Corollary~\ref{cor:lb1-tdmin} does not tell explicitly 
how $\TD_{min}(\cC(G_k))$ depends on $n = F(k,2)$.
As already shown in~\cite{Sim2022}, a more useful lower bound 
can be obtained via the probabilistic method: 

\begin{theorem}[\cite{Sim2022}]
For every sufficiently large $n$, there exists a tournament $G_n$
of order $n$ such that
\[
\TD_{min}(\cC(G_n)) > \log(n) - 2 \log\log(2n) - 2 \enspace .
\]
\end{theorem}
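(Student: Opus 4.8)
The plan is to use the probabilistic method. Consider a random tournament $G_n$ on the vertex set $V = \{0,1,\ldots,n-1\}$ obtained by orienting each of the $\binom{n}{2}$ edges independently and uniformly at random. Writing $t = \log(n) - 2\log\log(2n) - 2$, it suffices to show that, with positive probability, no concept $C_x \in \cC(G_n)$ admits a teaching set of size at most $t$; equivalently, that the probability of the bad event $B = \{\exists x \in V : \TD(C_x,\cC(G_n)) \le t\}$ is strictly less than $1$. Recall that $C_x(a)=1$ iff $x$ beats $a$, so a set $D\seq V$ is a teaching set for $C_x$ exactly when every vertex $y\neq x$ disagrees with $x$ on at least one instance of $D$.

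The heart of the argument is a bound, for a fixed vertex $x$ and a fixed $D \seq V$ with $|D| = s$, on the probability that $D$ teaches $C_x$. Since a teaching set in particular distinguishes $C_x$ from all $C_y$ with $y \notin D \cup \{x\}$, I would condition on the restriction $C_x|_D$, i.e.\ on the orientations of the edges $\{x,a\}$, $a\in D$. The key observation is that, once $C_x|_D$ is fixed, the events $\{C_y \text{ agrees with } C_x \text{ on } D\}$ for distinct $y \notin D \cup \{x\}$ depend on pairwise disjoint edge sets $\{\{y,a\}: a\in D\}$, all disjoint from the edges determining $C_x|_D$; hence these events are mutually independent, each occurring with probability exactly $2^{-s}$. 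This yields, independently of the value of $C_x|_D$,
\[
\Pr[D \text{ teaches } C_x] \;\le\; (1 - 2^{-s})^{\,n-s-1} \enspace .
\]

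From here I would finish with a union bound. Because teaching sets are upward closed, $B$ occurs only if some $C_x$ has a teaching set of size exactly $s^\ast := \lfloor t\rfloor$ (pad any smaller one, using $n\ge s^\ast$), so with $\binom{n}{s^\ast} \le n^{s^\ast}$ and $1-u \le e^{-u}$,
\[
\Pr[B] \;\le\; n \binom{n}{s^\ast} (1-2^{-s^\ast})^{\,n-s^\ast-1} \;\le\; n^{\,s^\ast+1}\exp\!\big(-2^{-s^\ast}(n-s^\ast-1)\big) \enspace .
\]
Taking logarithms, it remains to check $(s^\ast+1)\ln n < 2^{-s^\ast}(n - s^\ast - 1)$. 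Substituting the value of $t$ gives $2^{-s^\ast} \ge 2^{-t} = 4(\log(2n))^2/n$, so the right-hand side is at least $(4-o(1))(\log(2n))^2$, whereas $s^\ast+1 < \log n$ makes the left-hand side at most $(\log n)(\ln n) = \ln(2)\,(\log n)^2$; since $\ln 2 < 4$, the inequality holds for all sufficiently large $n$, giving $\Pr[B]<1$.

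The main obstacle, and the only place the tournament structure intervenes, is the correlation forced by the constraint that exactly one of $(a,b),(b,a)$ is an edge, which makes the concepts $C_0,\ldots,C_{n-1}$ dependent. The conditioning-plus-restriction step is precisely what neutralizes this: by discarding the few competitors $y\in D$ and conditioning on $C_x|_D$, the surviving comparisons reduce to independent fair coins, so the naive agreement probability $2^{-s}$ is preserved. Everything else is routine; the numerical slack between the factor $4$ and $\ln 2$ is exactly what the ``$2\log\log(2n)$'' and ``$-2$'' corrections in the statement buy, and verifying the final inequality is mechanical once the independence claim is established.
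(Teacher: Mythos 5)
Your proof is correct, but it takes a genuinely different route from the paper's. The paper never reasons about teaching sets directly in this proof: it invokes the chain ``strong $S_{k+1}$ $\impl$ strong $S_{k,2}$ $\impl$ $\TD_{min}(\cC(G_n)) > k$'' and then applies the probabilistic method to the strong $S_{k+1}$-property itself, taking a union bound over the $\binom{n}{k+1}$ vertex sets and the $2^{k+1}$ sign patterns to get the sufficient condition $2^{k+1}\binom{n}{k+1}(1-2^{-(k+1)})^{n-(k+1)}<1$, which it then massages into $k\le\log(n)-2\log\log(2n)-2$ exactly as you do at the end. You instead bound $\Pr[\exists x:\TD(C_x,\cC(G_n))\le t]$ directly, with the union running over the $n$ choices of $x$ and the $\binom{n}{s^*}$ candidate teaching sets, and with the conditioning on $C_x|_D$ replacing the union over sign patterns; your observation that the agreement events for distinct $y\notin D\cup\{x\}$ live on pairwise disjoint edge sets is the correct way to restore independence, and your padding step and the final numerical verification are sound. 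What each approach buys: the paper's detour through the $S_{k,2}$-property is what lets it reuse the same machinery for the QR-tournament result in the next theorem (where no randomness is available and one genuinely needs the deterministic implication), whereas your direct argument is self-contained, avoids the two-witness overhead of $S_{k,2}$ (your exponent is $n2^{-s^*}$ rather than the paper's $n2^{-(k+2)}$, so you actually have a factor-of-two more slack in the exponential term and could squeeze out a marginally better additive constant), but it does not generalize to the constructive QR setting. Both proofs are valid for the stated bound.
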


\begin{proof}
The proof given here makes use of the 
implication ``$\mbox{strong }S_{k+1} \impl \mbox{strong }S_{k,2}$'' 
and is slightly simpler than the proof given in~\cite{Sim2022}. If 
\begin{equation} \label{eq:prob-method}  
2^{k+1} \binom{n}{k+1} (1-2^{-(k+1)})^{n-(k+1)} < 1 \enspace ,
\end{equation}
then there is a strictly positive probability for the event
that a random tournament of order $n$ has the strong $S_{k+1}$- and
therefore also the strong $S_{k,2}$-property. In this case, we may
conclude that there exists a tournament $G_n$ of order $n$
such that $\TD_{min}(\cC(G_n)) > k$. We may clearly assume 
that $n \ge 2(k+1)$ so that $n-(k+1) \ge n/2$. Making use of $n-(k+1) \ge n/2$, 
$\binom{n}{k} \le n^k$ and $1+x \le e^x$ with equality for $x=0$ only, 
we get the following sufficient condition for~(\ref{eq:prob-method}):
\[
(2n)^{k+1} \exp\left(-\frac{n}{2^{k+2}}\right) \le 1 \enspace .
\]
After taking logarithm on both hand-sides and rearranging some terms,
this becomes
\[
(k+1) 2^{k+2} \ln(2n) \le n \enspace .
\]
A straightforward calculation shows that the latter condition
is satisfied whenever $k \le \log(n) - 2 \log\log(2n) - 2$. 
From this discussion, the assertion of the theorem is immediate.
\end{proof}

Our main implication for teaching is the fact that the
quadratic-residue tournament of order $p$ induces
a concept class whose $\TD_{min}$ grows logarithmically
with $p$:

\begin{theorem}
Let $p$ be a prime that is congruent to $3$ modulo $4$.
Let $G_p$ be the quadratic-residue tournament of order $p$.
Then 
\[
\TD_{min}(G_p) > \frac{1}{2} \log(p) - \log\log(p) - 1 \enspace . 
\]
\end{theorem}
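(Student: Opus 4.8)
The plan is to read off the bound from Theorem~\ref{th:qr-strong-sk} together with the two implications recorded in the remark preceding Corollary~\ref{cor:lb1-tdmin}: that the strong $S_{k+1}$-property implies the strong $S_{k,2}$-property, and that the strong $S_{k,2}$-property forces $\TD_{min}(G) > k$. Chaining these, it suffices to exhibit, for the QR-tournament $G_p$, as large a value of $k$ as possible for which $G_p$ enjoys the strong $S_{k+1}$-property. Applying Theorem~\ref{th:qr-strong-sk} with $k+1$ in place of $k$, the tournament $G_p$ has the strong $S_{k+1}$-property whenever
\[
p > (k+1)^2 2^{2(k+1)-2} = (k+1)^2 2^{2k} \enspace .
\]
Thus the whole argument reduces to finding the largest integer $k$ satisfying this inequality and checking that it is at least $\frac{1}{2}\log(p) - \log\log(p) - 1$.

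Concretely, I would set $k = \lfloor \frac{1}{2}\log(p) - \log\log(p) - 1 \rfloor$ and verify that this choice satisfies $p > (k+1)^2 2^{2k}$. From $k \le \frac{1}{2}\log(p) - \log\log(p) - 1$ one gets $2k \le \log(p) - 2\log\log(p) - 2$, hence $2^{2k} \le p / (4\log^2(p))$; and from $k+1 \le \frac{1}{2}\log(p)$ one gets $(k+1)^2 \le \frac{1}{4}\log^2(p)$. Multiplying the two estimates yields $(k+1)^2 2^{2k} \le p/16 < p$, so the hypothesis of Theorem~\ref{th:qr-strong-sk} is met with room to spare; the slack factor $1/16$ indicates that the correction term $-\log\log(p)-1$ is (intentionally) generous, its purpose being precisely to absorb the factor $(k+1)^2$ coming from the theorem.

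Having established the strong $S_{k+1}$- and hence the strong $S_{k,2}$-property for this $k$, the remark gives $\TD_{min}(\cC(G_p)) > k$. Since $\TD_{min}$ is an integer, the strict inequality $\TD_{min} > k = \lfloor \frac{1}{2}\log(p) - \log\log(p) - 1 \rfloor$ upgrades to $\TD_{min} \ge k+1 > \frac{1}{2}\log(p) - \log\log(p) - 1$, which is exactly the claimed bound. I do not expect a genuine conceptual obstacle here: the argument is a direct composition of Theorem~\ref{th:qr-strong-sk} with the elementary implications from the remark. The only points requiring care are bookkeeping in the logarithmic calculation (keeping all logarithms to base $2$ and handling the floor so that strictness survives) and confirming that $p$ is large enough that $k \ge 1$, so that the implication ``$\mbox{strong }S_{k+1} \impl \mbox{strong }S_{k,2}$'' genuinely applies; for the small primes where $\frac{1}{2}\log(p) - \log\log(p) - 1 < 1$ the asserted inequality holds trivially. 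The modest calculation of the second paragraph is therefore the technical heart of the proof.
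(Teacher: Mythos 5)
Your proposal is correct and follows essentially the same route as the paper: invoke Theorem~\ref{th:qr-strong-sk} with $k+1$ in place of $k$, chain the implications ``strong $S_{k+1}$ $\impl$ strong $S_{k,2}$ $\impl$ $\TD_{min} > k$'', and check that $k \le \frac{1}{2}\log(p) - \log\log(p) - 1$ guarantees $(k+1)^2 2^{2k} < p$. The only difference is that you spell out the ``straightforward calculation'' (the $p/16$ slack estimate) and the floor/triviality bookkeeping that the paper leaves implicit.
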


\begin{proof}
We make again use of the 
implication ``$\mbox{strong }S_{k+1} \impl \mbox{strong }S_{k,2}$''.
According to Theorem~\ref{th:qr-strong-sk}, the following holds:
if 
\begin{equation} \label{eq:qr-lb}
(k+1)^2 2^{2k} < p \enspace ,
\end{equation} 
then $G_p$ has the strong $S_{k+1}$- and therefore also the strong $S_{k,2}$-property. 
In this case, we may conclude that $\TD_{min}(G_p) > k$.
A straightforward calculation shows that (\ref{eq:qr-lb}) holds
whenever $k \le \frac{1}{2} \log(p) - \log\log(p) - 1$.
From this discussion, the assertion of the theorem is immediate.
\end{proof}



\end{document}